\newcommand{\doublewidetilde}[1]{{%
		\mathpalette\double@widetilde{#1}}}
\newcommand{\double@widetilde}[2]{%
		\sbox\z@{$\m@th#1\widetilde{#2}$}%
		\ht\z@=.5\ht\z@
		\widetilde{\box\z@}}
\newtheorem{definition}{Definition}
\newtheorem{theorem}{Theorem}
\newtheorem{lemma}{Lemma}
\newtheorem{remark}{Remark}
\begin{document}
%
\title{\huge Massive MIMO under Double Scattering Channels: Power Minimization and Congestion Controls}

\author{
	 \IEEEauthorblockN{Trinh~Van~Chien$^{\ast}$, Hien~Quoc~Ngo$^{\dagger}$, Symeon~Chatzinotas$^{\ast}$, Bj\"{o}rn~Ottersten$^{\ast}$, and Merouane Debbah$^{\xi}$} 
	\IEEEauthorblockA{$^{\ast}$Interdisciplinary Centre for Security, Reliability and Trust (SnT), University of Luxembourg, Luxembourg\\
		$^{\dagger}$School of Electronics, Electrical Engineering and Computer Science, Queen's University Belfast, Belfast, UK \\
		$^{\xi}$CentraleSup\'elec, Universit\'e Paris-Saclay \&  Lagrange Mathematical and Computing Research Center, Paris, France 
	}
\thanks{The work of T. V. Chien, S. Chatzinotas, and  B. Ottersten was supported by FNR, Luxembourg
under the COREproject C16/IS/11306457/ELECTIC (Energy and CompLexity
EffiCienT mIllimeter-waveLarge-Array Communications). The work of H. Q. Ngo was supported by the UK Research and Innovation Future Leaders Fellowships under Grant MR/S017666/1.
}
}

\maketitle

\begin{abstract}
 This paper considers a massive MIMO system under the double scattering channels. We derive a closed-form expression of the uplink ergodic spectral efficiency (SE) by exploiting the maximum-ratio combining technique with imperfect channel state information. We then formulate and solve a total uplink data power optimization problem that aims at simultaneously satisfying the required SEs from all the users with limited power resources. We further propose algorithms to cope with the congestion issue appearing when at least one user is served by lower SE than requested. Numerical results illustrate the effectiveness of our proposed power optimization. More importantly, our proposed congestion-handling algorithms can guarantee the required SEs to many users under congestion, even when the SE requirement is high.
\end{abstract}

%
\IEEEpeerreviewmaketitle

\vspace*{-0.2cm}
\section{Introduction}
\vspace*{-0.15cm}
Wireless communications has sustained an exponential demand growth in data throughput over the last decades. However, mobile traffic will increase as foreseen in a short time with $12.3$ billion devices by $2022$ \cite{index2019global}. To handle this, massive MIMO, a disruptive technology, does not only inherit all the  multiplexing and diversity gains of the conventional MIMO but also offers large degree-of-freedoms as a consequence of equipping base stations (BSs) with many antennas \cite{massivemimobook}. Massive MIMO, therefore, provides unprecedented spectral and energy efficiency gains of modern wireless networks with only utilizing the contemporary time and frequency resources.

In massive MIMO, the closed-form SE expression can be obtained in certain scenarios such as rich scattering environments modeled by uncorrelated Rayleigh fading \cite{Ngo2013a} and references therein. Nonetheless, practical channels usually involve spatial correlation, which is modeled by correlated Rayleigh fading when the gathered energy at the antenna array comes from many directions likely leading to the full ranks of covariance matrices with an overwhelming probability \cite{Hoydis2013a,Chien2018a, yu2002models}. For rank deficiency occurring in poor scattering conditions, the authors in \cite{Gesbert2002a} proposed the \textit{double scattering channel} to characterize by the structure of scattering in the propagation environment and the spatial correlations around the transceiver. The first work numerically studying the uplink ergodic SE of cellular massive MIMO systems with the double scattering channels was found in \cite{van2016multi}. For theoretical analysis, the authors in \cite{kammoun2019asymptotic,Ye2020} computed the asymptotic ergodic SE of a single-cell massive MIMO system with the different linear precoding techniques when the number of BS antennas, scatterers, and users grow large with the same rate. To the best of our knowledge, no prior work analyzes the performance of massive MIMO systems with a finite number of BS antennas, users, and scatterers.

Many resource allocation tasks in massive MIMO communications can be implemented on the large-scale fading time scale \cite{Chien2017b}.  Notice that the key component of massive MIMO communications is allowing many users to access and share the radio resource at the same time with high quality of service. The max-min fairness optimization is  promising to provide uniform service to all the users \cite{massivemimobook}.  However, for large-scale networks with many base stations and users, the fairness level will approach a zero rate. In contrast, one can include separate SE constraints in the optimization problems to simultaneously maintain the quality of service for all the users \cite{van2020power}. Since users were randomly distributed, many user locations with poor channel conditions leads the optimization problems to be infeasible \cite{van2020uplink}.

By exploiting the double scattering channel model, this paper considers a massive MIMO system where a set of orthogonal pilot signals are reused by all the users. A new uplink ergodic SE expression is derived in closed form for a finite number of antennas at each base station (BS) and different  number of scatterers observed by every user and BS. After that,  we formulate and solve a total uplink data energy minimization problem subject to the required SE from every user and the power constraints. For user locations and shadow fading realizations, where our optimization problem is feasible, the global optimum can be obtained in polynomial time. We further propose two low computational complexity iterative algorithms that tackle the infeasible optimization problem by relaxing the SE constraints of unsatisfied users. Numerical results manifest the closed-form SE expression overlapping Monte-Carlo simulations. The effectiveness of the proposed data power control algorithms are compared with the interior-point methods.

\textit{Notation}: Upper-case/lower-case bold face letters are used to denote matrices and vectors, respectively. $\mathbf{I}_{M}$ is the identity matrix of size $M \times M$. $\mathbb{E} \{ \cdot \}$  denotes the expectation of a random variable. $\| \cdot \|$ is Euclidean norm. $\mathrm{tr}(\cdot)$ is the trace of a matrix. The regular and Hermitian transposes
are denoted by $(\cdot)^T$ and $(\cdot)^H$, respectively. Finally, $\mathcal{CN}(\cdot, \cdot)$ denotes the circularly symmetric complex Gaussian distribution.
\vspace*{-0.15cm}
\section{ Massive MIMO with Double Scattering Channels} \label{Sec:SystemModel}
\vspace*{-0.15cm}
We consider an uplink massive MIMO system comprising $L$~cells, where cell~$l$ has one BS equipped with $M$ antennas serving $K$ single-antenna users. A quasi-static channel model is used, where the time-frequency plane is divided into coherence blocks. Each coherence block has $\tau_c$ symbols for which the $\tau_p$ symbols are dedicated to the pilot training phase and the remaining $\tau_c - \tau_p$ symbols are used for the uplink data transmission (the downlink data transmission is neglected). The channel between user~$k$ in cell~$l$ and BS~$l'$ is modeled by \cite{van2016multi}, which is\footnote{This channel model was initiated for conventional MIMO systems under a far-field region and dedicated sub $6$-GHz bands for mobile services. In massive MIMO communications, the far-field effects are still observed since many antenna components can be installed in a small compact array \cite{bjornson2019massive}.}
\begin{equation} \label{eq:Channel}
\mathbf{h}_{lk}^{l'}  =  \sqrt{\beta_{lk}^{l'}/S_{lk}^{l'}} \left(\mathbf{R}_{lk}^{l'}\right)^{1/2} \mathbf{G}_{lk}^{l'} \left(\widetilde{\mathbf{R}}_{lk}^{l'} \right)^{1/2} \mathbf{g}_{lk}^{l'},
\end{equation}
where $\beta_{lk}^{l'}$ is the large-scale fading coefficient. $S_{lk}^{l'}$ is the number of scatterers generating the channel between BS~$l'$ and user~$k$ in cell~$l$. The matrix $\mathbf{R}_{lk}^{l'} \in \mathbb{C}^{M \times M }$ represents the correlation between the BS antennas and its scatterers; $\mathbf{G}_{lk}^{l'} \in \mathbb{C}^{M \times S_{lk}^{l'}}$ includes the corresponding small-scale fading coefficients. The matrix $\widetilde{\mathbf{R}}_{lk}^{l'} \in \mathbb{C}^{S_{lk}^{l'} \times S_{lk}^{l'} }$ is the correlation matrix between the transmit and receive
scatterers and $\mathbf{g}_{lk}^{l'} \in \mathbb{C}^{S_{lk}^{l'}}$ represents the corresponding small-scale fading coefficients. The elements of both $\mathbf{G}_{lk}^{l'}$ and $\mathbf{g}_{lk}^{l'}$ are independent and identically distributed as $\mathcal{CN}(0,1)$ conditioned on the trace of the covariance matrices. 
\vspace*{-0.15cm}
\subsection{Uplink Pilot Training}
\vspace*{-0.15cm}
In each coherence block, each BS needs instantaneous channel state information for the uplink data detection. The $\tau_p$ symbols are dedicated to the uplink pilot training, which can create $\tau_p$ mutually orthogonal pilot signals. User~$k$ in cell~$l$ uses the deterministic pilot signal $\pmb{\phi}_{lk} \in \mathbb{C}^{\tau_p}$ with $\| \pmb{\phi}_{lk}\|^2 = \tau_p$. This pilot signal is also reused by other users in multiple cells and we can define the pilot reuse set as
$\mathcal{P}_{lk} = \left\{ (l',k'): \pmb{\phi}_{l'k'} = \pmb{\phi}_{lk}, l = 1, \ldots, L, k'=1, \ldots, K \right\}$,
which contains the indices of all users sharing the same pilot signal as user~$k$ in cell~$l$, including $(l,k)$. Mathematically, it observes that $\pmb{\phi}_{lk}^{\rm H} \pmb{\phi}_{l'k'} =  \tau_p$ if  $(l',k') \in \mathcal{P}_{lk}$. Otherwise, $\pmb{\phi}_{lk}^{\rm H} \pmb{\phi}_{l'k'} = 0$. At BS~$l$, the received pilot signal $\mathbf{Y}_l^p \in \mathbb{C}^{M \times \tau_p}$ with the superscript $p$ standing for the pilot training phase is
\begin{equation}
\mathbf{Y}_l^p  = \sum_{l'=1}^L \sum_{k'=1}^{K} \sqrt{\hat{p}_{l'k'}} \mathbf{h}_{l'k'}^l \pmb{\phi}_{l'k'}^{\rm H} + \mathbf{N}_l^p,
\end{equation}
where $\mathbf{N}_l^p \in \mathbb{C}^{M \times \tau_p}$ is additive noise with the independent and identically random elements distributed as $\mathcal{CN}(0, \sigma^2)$. BS~$l$ estimates the channel $\mathbf{h}_{l'k'}^l$ from user~$k'$ in cell~$l'$ from
\begin{equation} \label{eq:ylklp}
\mathbf{y}_{l'k'}^{l,p} = \mathbf{Y}_{l}^p \pmb{\phi}_{l'k'} = \sum_{(l'',k'') \in \mathcal{P}_{l'k'}} \sqrt{\hat{p}_{l''k''}} \tau_p \mathbf{h}_{l''k''}^l + \mathbf{N}_l^p \pmb{\phi}_{l'k'}.
\end{equation}
The processed received signal $\mathbf{y}_{l'k'}^{l,p} \in \mathbb{C}^{M}$ has sufficient statistics to obtain a channel estimate of the origin $\mathbf{h}_{l'k'}^l$ by utilizing linear MMSE (LMMSE). 
\begin{lemma} \label{lemma:ChannelEstPhaseAware}
By utilizing the LMMSE estimation, the channel estimate $\hat{\mathbf{h}}_{l'k'}^l \in \mathbb{C}^{M}$ from user~$k'$ in cell~$l'$ and BS~$l$ is 
\begin{equation} \label{eq:ChannelEst}
\hat{\mathbf{h}}_{l'k'}^l = \sqrt{\hat{p}_{l'k'}} \beta_{l'k'}^l d_{l'k'}^l \mathbf{R}_{l'k'}^l \pmb{\Psi}_{l'k'}^l  \mathbf{y}_{l'k'}^{l,p},
\end{equation}
where $\pmb{\Psi}_{l'k'}^l = \left(  \sum_{(l'',k'') \in \mathcal{P}_{l'k'}} a_{l''k''}^l \mathbf{R}_{l''k''}^l + \sigma^2 \mathbf{I}_{M} \right)^{-1}$, with $a_{l''k''}^l =  \tau_p \hat{p}_{l''k''} \beta_{l''k''}^l d_{l''k''}^l$ and $d_{l'k'}^l = \mathrm{tr}\big(\widetilde{\mathbf{R}}_{l'k'}^l\big)/S_{l'k'}^l $. The covariance matrix of the channel estimate $\hat{\mathbf{h}}_{l'k'}^l$ is
\begin{equation} \label{eq:CovarianceEst}
\mathbb{E} \Big\{ \hat{\mathbf{h}}_{l'k'}^l \big(\hat{\mathbf{h}}_{l'k'}^l\big)^{\rm H} \Big\} =\hat{p}_{l'k'} \big(\beta_{l'k'}^l\big)^2 \big(d_{l'k'}^l \big)^2 \tau_p \mathbf{R}_{l'k'}^l \pmb{\Psi}_{l'k'}^l  \mathbf{R}_{l'k'}^l.
\end{equation}
\end{lemma}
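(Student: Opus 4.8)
The plan is to apply the standard LMMSE estimator
$\hat{\mathbf{h}}_{l'k'}^l = \mathbb{E}\{\mathbf{h}_{l'k'}^l (\mathbf{y}_{l'k'}^{l,p})^{\rm H}\}\big(\mathbb{E}\{\mathbf{y}_{l'k'}^{l,p}(\mathbf{y}_{l'k'}^{l,p})^{\rm H}\}\big)^{-1}\mathbf{y}_{l'k'}^{l,p}$ to the processed pilot signal in \eqref{eq:ylklp}, so that the entire proof reduces to evaluating three second-order moments: the channel covariance, the cross-covariance between the target channel and the pilot observation, and the observation covariance.

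The first and main step is to compute $\mathbb{E}\{\mathbf{h}_{lk}^{l'}(\mathbf{h}_{lk}^{l'})^{\rm H}\}$ directly from the double-scattering model \eqref{eq:Channel}. I would take the expectation in two stages, exploiting that the entries of $\mathbf{G}_{lk}^{l'}$ and $\mathbf{g}_{lk}^{l'}$ are i.i.d.\ $\mathcal{CN}(0,1)$ and mutually independent. Averaging over $\mathbf{g}_{lk}^{l'}$ first, with $\mathbb{E}\{\mathbf{g}_{lk}^{l'}(\mathbf{g}_{lk}^{l'})^{\rm H}\}=\mathbf{I}_{S_{lk}^{l'}}$, replaces the inner factor by $\widetilde{\mathbf{R}}_{lk}^{l'}$. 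Averaging then over $\mathbf{G}_{lk}^{l'}$ requires the identity $\mathbb{E}\{\mathbf{G}_{lk}^{l'}\widetilde{\mathbf{R}}_{lk}^{l'}(\mathbf{G}_{lk}^{l'})^{\rm H}\}=\mathrm{tr}(\widetilde{\mathbf{R}}_{lk}^{l'})\mathbf{I}_M$, which I would verify by writing its $(i,j)$-entry as $\sum_{m,n}\mathbb{E}\{G_{im}\bar{G}_{jn}\}\widetilde{R}_{mn}$ and using $\mathbb{E}\{G_{im}\bar{G}_{jn}\}=\delta_{ij}\delta_{mn}$. Combining with the scalar $\beta_{lk}^{l'}/S_{lk}^{l'}$ and the definition $d_{lk}^{l'}=\mathrm{tr}(\widetilde{\mathbf{R}}_{lk}^{l'})/S_{lk}^{l'}$ yields $\mathbb{E}\{\mathbf{h}_{lk}^{l'}(\mathbf{h}_{lk}^{l'})^{\rm H}\}=\beta_{lk}^{l'}d_{lk}^{l'}\mathbf{R}_{lk}^{l'}$. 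This is the delicate part of the argument, since it is where the two-layer scattering structure is unwound; everything afterward is routine bookkeeping built on this covariance.

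With the covariance available, I would use the mutual independence of the channels of distinct users and of the additive noise to evaluate the remaining moments of \eqref{eq:ylklp}. Only the $(l',k')$ term survives in the cross-covariance, giving $\mathbb{E}\{\mathbf{h}_{l'k'}^l (\mathbf{y}_{l'k'}^{l,p})^{\rm H}\}=\sqrt{\hat{p}_{l'k'}}\,\tau_p\beta_{l'k'}^l d_{l'k'}^l\mathbf{R}_{l'k'}^l$. For the observation covariance, the independent signal contributions add to $\sum_{(l'',k'')\in\mathcal{P}_{l'k'}}\hat{p}_{l''k''}\tau_p^2\beta_{l''k''}^l d_{l''k''}^l\mathbf{R}_{l''k''}^l$, while the noise term gives $\mathbb{E}\{\mathbf{N}_l^p\pmb{\phi}_{l'k'}\pmb{\phi}_{l'k'}^{\rm H}(\mathbf{N}_l^p)^{\rm H}\}=\sigma^2\|\pmb{\phi}_{l'k'}\|^2\mathbf{I}_M=\sigma^2\tau_p\mathbf{I}_M$ using $\|\pmb{\phi}_{l'k'}\|^2=\tau_p$. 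Factoring $\tau_p$ out of the sum identifies the bracket as $(\pmb{\Psi}_{l'k'}^l)^{-1}$ with $a_{l''k''}^l=\tau_p\hat{p}_{l''k''}\beta_{l''k''}^l d_{l''k''}^l$, so the observation covariance equals $\tau_p(\pmb{\Psi}_{l'k'}^l)^{-1}$ and its inverse is $\tau_p^{-1}\pmb{\Psi}_{l'k'}^l$. Substituting the three moments into the LMMSE formula cancels the factor $\tau_p$ and produces \eqref{eq:ChannelEst}.

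Finally, to obtain the estimate covariance \eqref{eq:CovarianceEst}, I would write $\hat{\mathbf{h}}_{l'k'}^l=\mathbf{W}\mathbf{y}_{l'k'}^{l,p}$ with $\mathbf{W}=\sqrt{\hat{p}_{l'k'}}\beta_{l'k'}^l d_{l'k'}^l\mathbf{R}_{l'k'}^l\pmb{\Psi}_{l'k'}^l$ and evaluate $\mathbf{W}\,\mathbb{E}\{\mathbf{y}_{l'k'}^{l,p}(\mathbf{y}_{l'k'}^{l,p})^{\rm H}\}\,\mathbf{W}^{\rm H}$. Since $\mathbf{R}_{l'k'}^l$, and hence $\pmb{\Psi}_{l'k'}^l$, are Hermitian, $\mathbf{W}^{\rm H}=\sqrt{\hat{p}_{l'k'}}\beta_{l'k'}^l d_{l'k'}^l\pmb{\Psi}_{l'k'}^l\mathbf{R}_{l'k'}^l$, and the middle factor $\tau_p(\pmb{\Psi}_{l'k'}^l)^{-1}$ makes the product $\pmb{\Psi}_{l'k'}^l(\pmb{\Psi}_{l'k'}^l)^{-1}\pmb{\Psi}_{l'k'}^l$ collapse to $\pmb{\Psi}_{l'k'}^l$, leaving exactly $\hat{p}_{l'k'}(\beta_{l'k'}^l)^2(d_{l'k'}^l)^2\tau_p\mathbf{R}_{l'k'}^l\pmb{\Psi}_{l'k'}^l\mathbf{R}_{l'k'}^l$, which is \eqref{eq:CovarianceEst}.
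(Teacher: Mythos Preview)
Your proposal is correct and follows precisely the approach sketched by the paper, which merely states that the proof ``is based on the LMMSE estimation \cite{Kay1993a}, but adapted to our framework with the channel vector in \eqref{eq:Channel} and the pilot reuse pattern with non-Gaussian random variables,'' and omits all details. Your derivation supplies exactly those details: the two-stage expectation over $\mathbf{g}_{lk}^{l'}$ and $\mathbf{G}_{lk}^{l'}$ to obtain $\mathbb{E}\{\mathbf{h}_{lk}^{l'}(\mathbf{h}_{lk}^{l'})^{\rm H}\}=\beta_{lk}^{l'}d_{lk}^{l'}\mathbf{R}_{lk}^{l'}$, followed by the routine assembly of the cross- and auto-covariances of $\mathbf{y}_{l'k'}^{l,p}$ and substitution into the LMMSE filter, is the natural (and essentially unique) way to carry out what the paper indicates.
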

\begin{proof}
The proof is based on the LMMSE estimation \cite{Kay1993a}, but adapted to our framework with the channel vector in \eqref{eq:Channel} and the pilot reuse pattern with non-Gaussian random variables. The detail proof is omitted due to space limitations.
\end{proof}
Lemma~\ref{lemma:ChannelEstPhaseAware} shows the concrete expression of the channel estimate of each user.  Our channel estimation considers the influence of pilot contamination in multi-cell massive MIMO scenarios, which is a generalization of the previous result in \cite{kammoun2019asymptotic,Ye2020} that assumed the orthogonal pilot signals for all the users in a single cell. 
\vspace*{-0.2cm}
\subsection{Uplink Data Transmission}
\vspace*{-0.15cm}
During the uplink data transmission, user~$k$ in cell~$l$ sends data symbol $s_{lk}$ with $\mathbb{E} \{ |s_{lk}|^2\} = 1$. The received data signal $\mathbf{y}_l \in \mathbb{C}^{M}$ at BS~$l$ is 
\begin{equation}
\mathbf{y}_l = \sum_{l'=1}^L \sum_{k'=1}^{K} \sqrt{p_{l'k'}} \mathbf{h}_{l'k'}^l s_{l'k'} + \mathbf{n}_l,
\end{equation}
where $p_{l'k'}$ is the transmit power of user~$k'$ in cell~$l'$ assigned to each data symbol and $\mathbf{n}_l$ is additive noise distributed as $\mathcal{CN}(\mathbf{0}, \sigma^2 \mathbf{I}_{M})$. By utilizing a combining vector $\mathbf{v}_{lk} \in \mathbb{C}^{M}$ and the use-and-then-forget channel capacity bounding technique as shown in \cite{massivemimobook}, the uplink ergodic SE is obtained as
\begin{equation} \label{eq:AchievableRate}
R_{lk} = \left(1 - \tau_p/\tau_c \right) \log_2 \left( 1 + \mathrm{SINR}_{lk} \right), [\mbox{b/s/Hz}],
\end{equation}
where the effective signal-to-interference-and-noise ratio (SINR) is given by \eqref{eq:GeneralSINR}, shown at the top of the next page.
\begin{figure*}
\begin{equation} \label{eq:GeneralSINR}
\mathrm{SINR}_{lk} = \frac{p_{lk} \big| \mathbb{E} \big\{ \mathbf{v}_{lk}^H \mathbf{h}_{lk}^l \big\} \big|^2 }{ \sum_{l'=1}^L \sum_{k'=1}^{K} p_{l'k'} \mathbb{E} \big\{ \big| \mathbf{v}_{lk}^H \mathbf{h}_{l'k'}^l \big|^2 \big\} -  p_{lk} \big| \mathbb{E} \big\{ \mathbf{v}_{lk}^H \mathbf{h}_{lk}^l \big\} \big|^2  + \sigma^2 \mathbb{E} \{ \|\mathbf{v}_{lk} \|^2 \}}.
\end{equation}
\vspace*{-0.1cm}
\hrulefill
\vspace*{-0.5cm}
\end{figure*}
The expectations in \eqref{eq:GeneralSINR} are computed over all the sources of randomness and \eqref{eq:AchievableRate} is an achievable rate since it is a lower bound on the channel capacity. This achievable rate can be computed numerically for any combining scheme, but with a high cost since many instantaneous channels need to be gathered such that several expectations can be numerically estimated.
\vspace*{-0.2cm}
\subsection{Uplink Spectral Efficiency Analysis}
\vspace*{-0.15cm}
If  maximum ratio (MR) combining is used by each BS, i.e.,$\big(\mathbf{v}_{lk} = \hat{\mathbf{h}}_{lk}^l\big),\forall l,k$, we obtain the closed-form expression for the uplink SE in \eqref{eq:AchievableRate} as shown by Theorem~\ref{Theorem:ClosedFormMR}.
\begin{theorem} \label{Theorem:ClosedFormMR}
When BS~$l$ uses the MR combing to decode the desired signal from user~$k$ in cell~$l$, the uplink SE obtained in \eqref{eq:AchievableRate} with the closed-form expression of the SINR value
\begin{equation} \label{eq:SINRlkMR}
\mathrm{SINR}_{lk} = \frac{ p_{lk} z_{lk}^l \Big| \mathrm{tr} \big(\mathbf{R}_{lk}^l \pmb{\Psi}_{lk}^l \mathbf{R}_{lk}^l \big) \Big|^2 }{ \mathsf{NI}_{lk} + \mathsf{CI}_{lk} + \mathsf{NO}_{lk} },
\end{equation}
where $\mathsf{NI}_{lk}, \mathsf{CI}_{lk},$ and $\mathsf{NO}_{lk}$ are respectively the non-coherent interference, coherent interference, and noise, which are 
\begin{align}
&\mathsf{NI}_{lk} = \sum_{l' =1 }^L \sum_{k'=1}^{K}  p_{l'k'} m_{l'k'}^l   \mathrm{tr} \big( \mathbf{R}_{lk}^l \pmb{\Psi}_{lk}^l \mathbf{R}_{lk}^l \mathbf{R}_{l'k'}^l  \big), \label{eq:NIlk} \\
&\mathsf{CI}_{lk} =  \sum_{(l',k') \in \mathcal{P}_{lk} \setminus (l,k)} p_{l'k'} z_{l'k'}^l \Big| \mathrm{tr} \big(  \mathbf{R}_{l'k'}^l \pmb{\Psi}_{lk}^l  \mathbf{R}_{lk}^l  \big)\Big|^2  + \notag \\
& \sum_{(l',k') \in \mathcal{P}_{lk}} p_{l'k'}     \frac{z_{l'k'}^l \mathrm{tr} \Big(\big(\widetilde{\mathbf{R}}_{l'k'}^l\big)^2 \Big)}{\big(d_{l'k'}^l S_{l'k'}^l\big)^2} \Big| \mathrm{tr} \big(  \mathbf{R}_{l'k'}^l \pmb{\Psi}_{lk}^l  \mathbf{R}_{lk}^l  \big)\Big|^2  \notag \\
& + \sum_{(l',k') \in \mathcal{P}_{lk}} p_{l'k'} z_{l'k'}^l  \frac{ \mathrm{tr} \Big(\big(\widetilde{\mathbf{R}}_{l'k'}^l \big)^2 \Big)}{\big(S_{l'k'}^l\big)^2} \times \notag \\
& \qquad \qquad \qquad \mathrm{tr} \left(  \mathbf{R}_{l'k'}^l \pmb{\Psi}_{lk}^l  \mathbf{R}_{lk}^l  \mathbf{R}_{l'k'}^l \mathbf{R}_{lk}^l \pmb{\Psi}_{lk}^l   \right), \label{eq:CIlk}\\
\mathsf{NO}_{lk} &=  \sigma^2 \hat{p}_{lk} \big(\beta_{lk}^l \big)^2 \big(d_{lk}^l \big)^2 \tau_p \mathrm{tr} \big(\mathbf{R}_{lk}^l \pmb{\Psi}_{lk}^l \mathbf{R}_{lk}^l \big),
\end{align}
with 
$m_{l'k'}^l = \beta_{l'k'}^l d_{l'k'}^l \hat{p}_{lk} \big(\beta_{lk}^l \big)^2 \big(d_{lk}^l \big)^2 \tau_p,$
and
$z_{l'k'}^l = \hat{p}_{l'k'}  \big(\beta_{l'k'}^l \big)^2 \big(d_{l'k'}^l \big)^2 \hat{p}_{lk} \big(\beta_{lk}^l \big)^2 \big( d_{lk}^l \big)^2 \tau_p^2,$ $\forall l',k',l.$
\end{theorem}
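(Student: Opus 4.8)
The plan is to insert the MR combiner $\mathbf{v}_{lk}=\hat{\mathbf{h}}_{lk}^l$ into \eqref{eq:GeneralSINR} and evaluate its four expectations one at a time, exploiting the linearity of the LMMSE estimate in Lemma~\ref{lemma:ChannelEstPhaseAware}. Writing $\hat{\mathbf{h}}_{lk}^l=\mathbf{B}_{lk}\mathbf{y}_{lk}^{l,p}$ with $\mathbf{B}_{lk}=\sqrt{\hat{p}_{lk}}\beta_{lk}^l d_{lk}^l\mathbf{R}_{lk}^l\pmb{\Psi}_{lk}^l$ and $\mathbf{y}_{lk}^{l,p}$ as in \eqref{eq:ylklp}, I would first record the two elementary moments that drive everything: the per-user covariance $\mathbb{E}\{\mathbf{h}_{l'k'}^l(\mathbf{h}_{l'k'}^l)^H\}=\beta_{l'k'}^l d_{l'k'}^l\mathbf{R}_{l'k'}^l$ (from \eqref{eq:Channel}, averaging first over $\mathbf{g}_{l'k'}^l$ and then over $\mathbf{G}_{l'k'}^l$), and the pilot covariance $\mathbb{E}\{\mathbf{y}_{lk}^{l,p}(\mathbf{y}_{lk}^{l,p})^H\}=\tau_p(\pmb{\Psi}_{lk}^l)^{-1}$, which holds because the pilot-sharing channels and the noise are mutually independent.

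The numerator and the noise term follow at once. In $\mathbb{E}\{(\hat{\mathbf{h}}_{lk}^l)^H\mathbf{h}_{lk}^l\}$ only the desired-channel summand of $\mathbf{y}_{lk}^{l,p}$ survives (the other pilot-sharing channels are zero-mean and independent), giving $\sqrt{\hat{p}_{lk}}\tau_p\,\mathrm{tr}(\mathbf{B}_{lk}^H\,\mathbb{E}\{\mathbf{h}_{lk}^l(\mathbf{h}_{lk}^l)^H\})$, which collapses to $\hat{p}_{lk}\tau_p(\beta_{lk}^l)^2(d_{lk}^l)^2\,\mathrm{tr}(\mathbf{R}_{lk}^l\pmb{\Psi}_{lk}^l\mathbf{R}_{lk}^l)$ and squares to the stated numerator $p_{lk}z_{lk}^l|\mathrm{tr}(\mathbf{R}_{lk}^l\pmb{\Psi}_{lk}^l\mathbf{R}_{lk}^l)|^2$. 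The noise term $\sigma^2\mathbb{E}\{\|\hat{\mathbf{h}}_{lk}^l\|^2\}=\sigma^2\,\mathrm{tr}(\mathbb{E}\{\hat{\mathbf{h}}_{lk}^l(\hat{\mathbf{h}}_{lk}^l)^H\})$ is read directly from \eqref{eq:CovarianceEst} and reproduces $\mathsf{NO}_{lk}$.

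The interference sum $\sum_{l',k'}p_{l'k'}\mathbb{E}\{|(\hat{\mathbf{h}}_{lk}^l)^H\mathbf{h}_{l'k'}^l|^2\}$ is the crux, and I would split it by pilot membership. If $(l',k')\notin\mathcal{P}_{lk}$, then $\mathbf{h}_{l'k'}^l$ is independent of $\hat{\mathbf{h}}_{lk}^l$, so the term reduces to $\beta_{l'k'}^l d_{l'k'}^l\,\mathrm{tr}(\mathbf{B}_{lk}^H\mathbf{R}_{l'k'}^l\mathbf{B}_{lk}\,\tau_p(\pmb{\Psi}_{lk}^l)^{-1})=m_{l'k'}^l\,\mathrm{tr}(\mathbf{R}_{lk}^l\pmb{\Psi}_{lk}^l\mathbf{R}_{lk}^l\mathbf{R}_{l'k'}^l)$, the non-coherent contribution. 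If $(l',k')\in\mathcal{P}_{lk}$, I would decompose $\mathbf{y}_{lk}^{l,p}=\sqrt{\hat{p}_{l'k'}}\tau_p\mathbf{h}_{l'k'}^l+\mathbf{w}$ with $\mathbf{w}$ independent of $\mathbf{h}_{l'k'}^l$; the cross term vanishes since $\mathbb{E}\{\mathbf{w}\}=\mathbf{0}$, leaving a remainder quadratic form of the same non-coherent type (but carrying $\mathbb{E}\{\mathbf{w}\mathbf{w}^H\}=\tau_p(\pmb{\Psi}_{lk}^l)^{-1}-\hat{p}_{l'k'}\tau_p^2\beta_{l'k'}^l d_{l'k'}^l\mathbf{R}_{l'k'}^l$) together with the genuinely new object $\hat{p}_{l'k'}\tau_p^2\,\mathbb{E}\{|(\mathbf{h}_{l'k'}^l)^H\mathbf{B}_{lk}^H\mathbf{h}_{l'k'}^l|^2\}$.

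This last, fourth-order moment of the double-scattering vector is where I expect the real difficulty, since $\mathbf{h}_{l'k'}^l$ is non-Gaussian and a single Wick contraction does not apply. My plan is two-stage conditioning. Conditioned on $\mathbf{G}_{l'k'}^l$, the vector in \eqref{eq:Channel} is circularly symmetric Gaussian with covariance $\mathbf{Q}=(\beta_{l'k'}^l/S_{l'k'}^l)(\mathbf{R}_{l'k'}^l)^{1/2}\mathbf{G}_{l'k'}^l\widetilde{\mathbf{R}}_{l'k'}^l(\mathbf{G}_{l'k'}^l)^H(\mathbf{R}_{l'k'}^l)^{1/2}$, so $\mathbb{E}\{|\mathbf{h}^H\mathbf{A}\mathbf{h}|^2\mid\mathbf{G}\}=|\mathrm{tr}(\mathbf{A}\mathbf{Q})|^2+\mathrm{tr}(\mathbf{A}\mathbf{Q}\mathbf{A}^H\mathbf{Q})$ applies with $\mathbf{A}=\mathbf{B}_{lk}^H$. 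I would then average over the Gaussian matrix $\mathbf{G}_{l'k'}^l$ through the two matrix-Wick contractions $\mathbb{E}\{\mathrm{tr}(\mathbf{X}\mathbf{T})\mathrm{tr}(\mathbf{X}^H\mathbf{T})\}=|\mathrm{tr}\mathbf{X}|^2(\mathrm{tr}\widetilde{\mathbf{R}})^2+\mathrm{tr}(\mathbf{X}\mathbf{X}^H)\mathrm{tr}(\widetilde{\mathbf{R}}^2)$ and $\mathbb{E}\{\mathrm{tr}(\mathbf{X}\mathbf{T}\mathbf{X}^H\mathbf{T})\}=\mathrm{tr}(\mathbf{X}\mathbf{X}^H)(\mathrm{tr}\widetilde{\mathbf{R}})^2+|\mathrm{tr}\mathbf{X}|^2\mathrm{tr}(\widetilde{\mathbf{R}}^2)$, where $\mathbf{T}=\mathbf{G}\widetilde{\mathbf{R}}\mathbf{G}^H$ and $\mathbf{X}=(\mathbf{R}_{l'k'}^l)^{1/2}\mathbf{A}(\mathbf{R}_{l'k'}^l)^{1/2}$; the $\mathrm{tr}((\widetilde{\mathbf{R}}_{l'k'}^l)^2)$ factors in $\mathsf{CI}_{lk}$ are precisely the footprint of the second contraction in each identity. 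The delicate final step is the bookkeeping: the $(\mathrm{tr}\widetilde{\mathbf{R}})^2$-weighted $\mathrm{tr}(\mathbf{X}\mathbf{X}^H)$ piece of the fourth moment cancels exactly against the negative correction hidden in $\mathbb{E}\{\mathbf{w}\mathbf{w}^H\}$; the self term $(l',k')=(l,k)$ regenerates $p_{lk}|\mathbb{E}\{(\hat{\mathbf{h}}_{lk}^l)^H\mathbf{h}_{lk}^l\}|^2$ and so leaves the coherent sum (which is why the first line of $\mathsf{CI}_{lk}$ omits $(l,k)$); and the surviving $|\mathrm{tr}\mathbf{X}|^2$- and $\mathrm{tr}(\widetilde{\mathbf{R}}^2)$-weighted terms regroup into the lines of $\mathsf{CI}_{lk}$, while the non-coherent remainders summed over all users assemble into $\mathsf{NI}_{lk}$, yielding \eqref{eq:SINRlkMR}.
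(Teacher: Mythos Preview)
Your proposal is sound and follows precisely the route the paper itself indicates: the paper's proof sketch merely states that the result ``is obtained by computing the expectations of non-Gaussian random variables in \eqref{eq:GeneralSINR}'' and omits all details due to space constraints, so there is no finer-grained comparison to make. Your two-stage conditioning (first on $\mathbf{G}_{l'k'}^l$ to reduce the fourth moment to the standard Gaussian quadratic-form identity, then averaging over $\mathbf{G}_{l'k'}^l$ via Wick contractions) is the natural way to unpack what the authors call ``expectations of non-Gaussian random variables,'' and the bookkeeping you describe---the cancellation of the $(\mathrm{tr}\,\widetilde{\mathbf{R}})^2$-weighted $\mathrm{tr}(\mathbf{X}\mathbf{X}^H)$ piece against the negative correction in $\mathbb{E}\{\mathbf{w}\mathbf{w}^H\}$, and the removal of $(l,k)$ from the first coherent sum---correctly accounts for the structure of $\mathsf{NI}_{lk}$ and $\mathsf{CI}_{lk}$.
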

\begin{proof}
The proof is obtained by computing the expectations of non-Gaussian random variables in \eqref{eq:GeneralSINR}. The detailed proof is omitted due to space limitations.
\end{proof}
The SINR expression \eqref{eq:SINRlkMR} is explicitly influenced by many factors such as channel covariance matrices, the number of scatterers, pilot reuse, channel estimation quality, which are hidden in the general formulation \eqref{eq:GeneralSINR}. Specifically, the numerator of  \eqref{eq:SINRlkMR} shows the contribution of both channel estimation quality and covariance matrix of user~$k$ in cell~$l$.  Moreover, the effectiveness of the array gain is verified since the numerator scales up with the number of antennas. The first part in the denominator of \eqref{eq:SINRlkMR} demonstrates the degradation of the received signal quality due to non-coherent interference. The second part presents the contributions of coherent interference caused by reusing the pilot signals among the users. Unlike previous works with many scatterers \cite{Chien2018a}, this part also points out that a small number of scatterers have significant contributions to increase non-coherent interference. The last part in the denominator of \eqref{eq:SINRlkMR} represents additive noise effects.
\vspace*{-0.25cm}
\section{Uplink Total Data Energy Consumption Minimization} \label{Sec:Optimization}
\vspace*{-0.25cm}
This section expresses an uplink energy consumption minimization problem by assuming that user~$k$ in cell~$l$ requests a SE $\xi_{lk} > 0, \forall l,k,$ and has a maximum power $P_{\max,lk} >0$. Investigating this optimization problem, we  manifest the feasibility and infeasibility under the limited power budget.
\vspace*{-0.4cm}
\subsection{Problem Formulation}
\vspace*{-0.25cm}
The main goal of 5G-and-beyond systems is to provide the high SEs to all users with a minimal power consumption. In this paper, we formulate a total data energy optimization problem for the uplink data transmission as follows  
\begin{equation} \label{Prob:TotalPowerOptv1}
\begin{aligned}
& \underset{\{ p_{lk} \geq 0 \} }{\mathrm{minimize}}
&&   (\tau_c - \tau_p) \sum_{ l=1}^L \sum_{k=1}^K p_{lk} \\
& \,\,\mathrm{subject \,to}
& &  R_{lk} \geq \xi_{lk}, \forall l,k, \\
& & & p_{lk} \leq P_{\max,lk}, \forall l,k,
\end{aligned}
\end{equation}
where $ P_{\max,lk}$ is the maximum power level that user~$k$ in cell~$l$ can allocate to each data symbol. Problem~\eqref{Prob:TotalPowerOptv1} constrains on the rate requirement and limited power budget of each user. By setting $\nu_{lk} = 2^{\frac{\xi_{lk} \tau_c}{\tau_c - \tau_p}} - 1$ and removing the constant $\tau_c - \tau_p$ in the objective function, problem~\eqref{Prob:TotalPowerOptv1} is converted from the SE constraints into the equivalent SINR constraints as
\begin{equation} \label{Prob:TotalPowerOptv3}
\begin{aligned}
& \underset{\{ p_{lk} \geq 0 \} }{\mathrm{minimize}}
&&    \sum_{ l=1}^L \sum_{k=1}^K p_{lk} \\
& \,\,\mathrm{subject \,to}
& &   \frac{ p_{lk} z_{lk}^l \left| \mathrm{tr} \left(\mathbf{R}_{lk}^l \pmb{\Psi}_{lk}^l \mathbf{R}_{lk}^l \right) \right|^2 }{ \mathsf{NI}_{lk} + \mathsf{CI}_{lk} + \mathsf{NO}_{lk}} \geq \nu_{lk}, \forall l,k, \\
& & & p_{lk} \leq P_{\max,lk}, \forall l,k.
\end{aligned}
\end{equation}
We stress that problem~\eqref{Prob:TotalPowerOptv3} jointly optimizes the powers to satisfy the requested SINRs from all the users. This problem can be either feasible or infeasible for a given set of user locations and shadow fading realizations. 
\vspace*{-0.1cm}
\subsection{Feasible and Infeasible Problems} \label{SubSec:FeasInfeas}
\vspace*{-0.1cm}
When problem~\eqref{Prob:TotalPowerOptv3} has a non-empty feasible domain meaning that the network can simultaneously provide the required SEs to all the users conditioned on the power constraints. The global optimal solution to  problem~\eqref{Prob:TotalPowerOptv3} can be then found. Indeed, it is straightforward to show that \eqref{Prob:TotalPowerOptv3} is a linear program on standard form \cite{Boyd2004a}. We hence enable to solve \eqref{Prob:TotalPowerOptv3} to the global optimality in polynomial time, for instance, utilizing a general interior-point optimization toolbox as CVX \cite{cvx2015}. It should be noticed that all the $KL$ users will spend non-zero data powers at the global optimum when \eqref{Prob:TotalPowerOptv3} is feasible owning to the  non-zero SE requirements.

There may be a situation that all the users cannot be simultaneously served by the SE requirements. Only one unfortunate user served with a lower SE suffices to create an empty feasible domain for the total transmit power optimization problem. Alternatively, problem~\eqref{Prob:TotalPowerOptv3} lacks a feasible solution \cite[Section 4.1]{Boyd2004a}. The unsatisfied SE is caused by high mutual interference in cellular networks and/or extreme locations as the cell edge leading to some users having a weak channel. Moreover, a user may require a too high SE and the system cannot provide this service even spending maximum data power. 
A feasible solution might still exist for most of the users with the required SEs by expecting that only one or a few users are \textit{unsatisfied}. It is sufficient to remove or reduce the required SEs of those unsatisfied users to convert an infeasible problem to a feasible one. This paper develops the power allocation strategies to handle such infeasible instances by allowing the corresponding SINR constraints to be violated.
\vspace*{-0.2cm}
\section{Congestion solution based on alternating optimization} \label{Sec:Solutions}
\vspace*{-0.15cm}
This section proposes the two algorithms attaining a fixed-point solution to problem~\eqref{Prob:TotalPowerOptv3} with either empty or non-empty feasible domain.
\vspace*{-0.2cm}
\subsection{Spending Maximum Transmit Power on Unsatisfied Users}
\vspace*{-0.15cm}
For the glorification of simplification in comprehension, problem~\eqref{Prob:TotalPowerOptv3} with a non-empty feasible domain is first considered. We stack all the data powers into a vector $\mathbf{p} = [p_{11}, \ldots, p_{LK}]^T \in \mathbb{R}_{+}^{LK}$, then the SINR constraint of user~$k$ in cell~$l$ is reformulated as
$p_{lk} \geq I_{lk} (\mathbf{p})$,
where $I_{lk} (\mathbf{p})$ is so-called a standard interference function, which is
\begin{equation} \label{eq:Ilk}
I_{lk} (\mathbf{p}) = \frac{\nu_{lk} \mathsf{NI}_{lk} (\mathbf{p}) + \nu_{lk} \mathsf{CI}_{lk} (\mathbf{p}) + \nu_{lk} \mathsf{NO}_{lk} }{ z_{lk}^l \left| \mathrm{tr} \left(\mathbf{R}_{lk}^l \pmb{\Psi}_{lk}^l \mathbf{R}_{lk}^l \right) \right|^2 }.
\end{equation}
In \eqref{eq:Ilk}, the detailed expressions of $\mathsf{NI}_{lk} (\mathbf{p})$ and $\mathsf{CI}_{lk} (\mathbf{p})$ have been already expressed in \eqref{eq:NIlk} and \eqref{eq:CIlk}, but we here emphasize them as the functions of data power variables stacked in $\mathbf{p}$. We now introduce the definition of a standard interference function for which an algorithm to obtain a fixed point solution is proposed.
\begin{definition}[Standard interference function] \label{Def:TypeI}
A function $I(\mathbf{z})$ is a standard interference function for all $\mathbf{z} \succeq \mathbf{0}$, if the following properties hold:\footnote{The notation $\mathbf{z} \succeq \mathbf{z}'$ indicates element-wise inequality $z_n \geq z_n', \forall n = 1, \ldots, KL$ with $z_n$ and $z_n'$ being the elements of $\mathbf{z}$ and $\mathbf{z}'$, respectively.}
$a)$ Positivity $I(\mathbf{z}) >0, \forall \mathbf{z} >0$. $b)$ Monotonicity $I(\mathbf{z})  \geq I(\mathbf{z}')$ if $\mathbf{z} \succeq \mathbf{z}'$. $c)$ Scalability: $\alpha I(\mathbf{z}) > I(\alpha \mathbf{z}), \forall \alpha > 1,$ for all scalar $\alpha > 1$.
\end{definition}
The positivity property is because of the inherent mutual interference and thermal noise in the system, which implies a non-zero value data powers when users request non-zero SEs. The monotonicity property ensures that we can scale up or down \eqref{eq:Ilk} by adjusting the data powers. Finally, the scalability property suggests a method to uniformly scale down the data power coefficient of user~$k$ in cell~$l$ 
at each iteration by utilizing a positive constant $\alpha$. We now construct a policy to update the data power of every user~$k$ in cell~$l$ in Theorem~\ref{Theorem:Alg1}.
\begin{theorem}\label{Theorem:Alg1}
By assuming that the feasible domain is non-empty and $0 \leq I_{lk} (\mathbf{p}) \leq P_{\max,lk}^2$ always holds for all $\mathbf{p}$ in the feasible domain. For the initial values of data powers $p_{lk} (0) = P_{\max,lk}, \forall l,k$, there exist data powers for which each interference function $I_{lk} (\mathbf{p})$ is non-increasing along iterations and converges to a fixed point. Particularly, the data power of user~$k$ in cell~$l$, denoted by $p_{lk}(n)$, can be updated at iteration~$n$ as
\begin{equation} \label{eq:plkn}
p_{lk}(n) = I_{lk} (\mathbf{p}(n-1)), \forall l,k.
\end{equation}
\end{theorem}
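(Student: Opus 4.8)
The plan is to cast this as an instance of Yates' standard interference function framework and to exploit the three structural properties collected in Definition~\ref{Def:TypeI}. Before invoking them, I would first confirm that the particular $I_{lk}$ in \eqref{eq:Ilk} is genuinely a standard interference function: positivity follows from the strictly positive noise contribution $\nu_{lk}\mathsf{NO}_{lk}$ in the numerator together with the positive denominator; monotonicity follows because $\mathsf{NI}_{lk}(\mathbf{p})$ and $\mathsf{CI}_{lk}(\mathbf{p})$ are affine with nonnegative coefficients in the entries of $\mathbf{p}$, so the numerator is nondecreasing while the denominator is constant; and scalability follows because, for $\alpha>1$, the difference $\alpha I_{lk}(\mathbf{p}) - I_{lk}(\alpha\mathbf{p})$ reduces to the strictly positive term proportional to $(\alpha-1)\nu_{lk}\mathsf{NO}_{lk}$, since $\mathsf{NI}_{lk}$ and $\mathsf{CI}_{lk}$ are homogeneous of degree one in $\mathbf{p}$ whereas $\mathsf{NO}_{lk}$ is not scaled.

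With these three properties in hand, I would establish the monotone decrease of the iterates by induction. Writing $\mathbf{P}_{\max}$ for the vector collecting the caps $P_{\max,lk}$, the base case is the assertion that the first update does not exceed the initialization, i.e. $p_{lk}(1)=I_{lk}(\mathbf{p}(0))=I_{lk}(\mathbf{P}_{\max})\leq P_{\max,lk}=p_{lk}(0)$ for every $(l,k)$; here I would use the standing boundedness hypothesis $0\leq I_{lk}(\mathbf{p})\leq P_{\max,lk}^2$ at the corner point $\mathbf{P}_{\max}$ together with nonemptiness of the feasible domain. Given the base case $\mathbf{p}(1)\preceq\mathbf{p}(0)$, the inductive step is immediate from monotonicity: if $\mathbf{p}(n)\preceq\mathbf{p}(n-1)$ then $p_{lk}(n+1)=I_{lk}(\mathbf{p}(n))\leq I_{lk}(\mathbf{p}(n-1))=p_{lk}(n)$, so each coordinate sequence $\{p_{lk}(n)\}_n$ is nonincreasing.

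To upgrade monotonicity to convergence, I would note that each sequence $\{p_{lk}(n)\}_n$ is bounded below by zero by the positivity property, so by the monotone convergence theorem it converges to some limit $p_{lk}^{\star}\geq 0$; collecting these gives $\mathbf{p}(n)\to\mathbf{p}^{\star}$. Because $I_{lk}$ in \eqref{eq:Ilk} is a ratio whose numerator is affine in $\mathbf{p}$ and whose denominator is a positive constant, it is continuous, so passing to the limit in $p_{lk}(n)=I_{lk}(\mathbf{p}(n-1))$ yields $p_{lk}^{\star}=I_{lk}(\mathbf{p}^{\star})$; hence $\mathbf{p}^{\star}$ is a fixed point of the update, as claimed. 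If uniqueness is also desired, it follows from scalability by the usual argument: were $\mathbf{p}^{\star}$ and $\mathbf{p}'$ two distinct fixed points, setting $\alpha=\max_{l,k} p_{lk}^{\star}/p'_{lk}>1$ and applying monotonicity then scalability gives $p_{lk}^{\star}=I_{lk}(\mathbf{p}^{\star})\leq I_{lk}(\alpha\mathbf{p}')<\alpha I_{lk}(\mathbf{p}')=\alpha p'_{lk}$ for all $(l,k)$, contradicting the choice of $\alpha$.

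I expect the single nontrivial point to be the base case $I_{lk}(\mathbf{P}_{\max})\leq P_{\max,lk}$. Monotonicity alone pushes the wrong way: since any feasible $\hat{\mathbf{p}}$ satisfies $\hat{\mathbf{p}}\preceq\mathbf{P}_{\max}$, monotonicity only yields $I_{lk}(\mathbf{P}_{\max})\geq I_{lk}(\hat{\mathbf{p}})$, so the descent cannot be anchored at the power cap from feasibility and monotonicity alone. This is precisely why the explicit boundedness hypothesis on $I_{lk}$ over the feasible domain is imposed, and it is the one place where initializing the recursion at $\mathbf{P}_{\max}$ (rather than at a generic feasible point, as in the classical statement) must be justified separately.
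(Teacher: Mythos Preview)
Your approach is essentially the same as the paper's: verify that $I_{lk}$ in \eqref{eq:Ilk} satisfies the three properties of Definition~\ref{Def:TypeI} (positivity, monotonicity, scalability) and then invoke Yates' standard interference function framework to conclude monotone convergence of the update \eqref{eq:plkn} to a fixed point. Your write-up is in fact more detailed than the paper's sketch, and your flagging of the base case $I_{lk}(\mathbf{P}_{\max})\leq P_{\max,lk}$ as the one delicate step---not directly delivered by the hypothesis $I_{lk}(\mathbf{p})\leq P_{\max,lk}^2$ on the feasible domain---is a valid observation that the paper does not address.
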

\begin{proof}
The proof is to testify $I_{lk} (\mathbf{p}),\forall l,k,$ defined in \eqref{eq:Ilk} being standard interference, so the updated power policy \eqref{eq:plkn} ensures that this iterative approach converges to a fixed point. The detailed proof is omitted due to space limitations. 
\end{proof}
Every user has its own standard interference function satisfying the three fundamental properties in Definition~\ref{Def:TypeI} and utilizing it to update the data power as in \eqref{eq:plkn}. The analysis in Theorem~\ref{Theorem:Alg1} is based on the assumption that problem~\eqref{Prob:TotalPowerOptv3} has the global optimum for which all users are served with their required SEs. The power constraints in \eqref{Prob:TotalPowerOptv3} ($p_{lk} \leq P_{\max, lk}, \forall l,k$) are tackled by the fact if $I_{lk}(n-1) > P_{\max,lk}$, then the congestion issue appears and leads to an obvious selection $p_{lk} (n) =  P_{\max,lk}$. We therefore define the constrained standard interference function used at iteration~$n-1$ as
\begin{equation} \label{eq:Ihatlk}
\hat{I}_{lk}(\mathbf{p}(n-1)) = \min \left( I_{lk}(\mathbf{p}(n-1)), P_{\max,lk}   \right).
\end{equation}
For a cellular massive MIMO system with the power budget constraints and the initial data power vector $\mathbf{p} (0)$ with the entries $p_{lk}(0) = P_{\max,lk}, \forall l,k,$ iteration~$n$ updates the data power of user~$k$ in cell~$l$ as
\begin{equation} \label{eq:plkv1}
p_{lk}(n) = \hat{I}_{lk}(\mathbf{p}(n-1)).
\end{equation}
Combining \eqref{eq:Ihatlk} and \eqref{eq:plkv1}, we observe that if $\hat{I}_{lk}(\mathbf{p}(n-1)) = P_{\max,lk}$, the update $p_{lk}(n) = P_{\max,lk}$ maintains the non-increasing objective function of problem~\eqref{Prob:TotalPowerOptv3}. Otherwise, it holds that $\hat{I}_{lk}(\mathbf{p} (n-1)) = I_{lk}(\mathbf{p} (n-1))$, and hence user~$k$ in cell~$l$ consumes less power than the maximum. This procedure will be applied to all the $KL$ users, which results in an alternating approach summarized in Algorithm~\ref{Algorithm1}. Notice that, when users cannot be served by the
required SEs, one still lets them utilize the maximum power. This policy aims at maximizing the SE of a particular user, however producing more mutual interference to the other users.

\begin{algorithm}[t]
	\caption{Data power allocation to problem~\eqref{Prob:TotalPowerOptv3} by spending maximum transmit power on unsatisfied users} \label{Algorithm1}
	\textbf{Input}:  Define $P_{\max,lk}, \forall l,k$; Select  $p_{lk}(0) = P_{\max,lk}, \forall l,k$; Compute the total data power $P_{\mathrm{tot}}(0) = \sum_{l=1}^L \sum_{k=1}^K p_{lk}(0)$; Set initial value $n=1$ and tolerance $\epsilon$.
	\begin{itemize}
		\item[1.] User~$k$ in cell~$l$ computes the standard interference function $	{I}_{lk} \left(\mathbf{p} (n-1) \right)$ using \eqref{eq:Ilk}.
		\item[2.] If ${I}_{lk} \left(\mathbf{p} (n-1) \right) > P_{\max,lk}$, update $p_{lk}(n) = P_{\max,lk}$. Otherwise, update $ p_{lk}(n) = 	{I}_{lk} \left(\mathbf{p} (n-1) \right) $.
		\item[3.] Repeat Steps $1,2$ with other users, then compute the ratio \fontsize{9}{9}{$\gamma (n) =$ $| P_{\mathrm{tot}}(n) - P_{\mathrm{tot}}(n-1) | /  P_{\mathrm{tot}}(n-1)$}.
		\item[4.] If $\gamma_l (n) \leq \epsilon$ $\rightarrow$ Set $p_{lk}^{\ast} = p_{lk}(n),\forall l,k,$ and Stop. Otherwise, set $n= n+1$ and go to Step $1$.
	\end{itemize}
	\textbf{Output}: A fixed point $p_{lk}^{\ast}$, $\forall l,k$. \vspace*{-0.05cm}
\end{algorithm}
\vspace*{-0.2cm}
\subsection{Softly Removing Unsatisfied Users}
\vspace*{-0.1cm}
Instead of allowing potential unsatisfied users to spend full data power,  one can reduce their power with the goal to degrade mutual interference to the others. This policy might ameliorate the number of satisfied users in the entire network. At first, every user improves the transmission quality by spending more power to each data symbol. This target can be achieved by, for example, simply constructing the standard inference functions as in the previous subsection. If at the limited power budget, the required SE cannot be achieved, unsatisfied users will reduce data power. We then mathematically suggest an update of the data powers along iterations as follows.
\begin{theorem} \label{theorem:Standardfunction}
From initial values $p_{lk}(0) = P_{\max,lk}, \forall l,k,$ if the data power of user~$k$ in cell~$l$ is updated at iteration~$n$ as
\begin{equation} \label{eq:UpdatedPowerv1}
p_{lk}(n)  = \begin{cases}
I_{lk} \left( \mathbf{p} (n-1) \right),& \mbox{if } I_{lk} \left( \mathbf{p} (n-1) \right) \leq P_{\max,lk}, \\
\frac{P_{\max,lk}^2}{I_{lk} \left(\mathbf{p} (n-1) \right)}, & \mbox{if } I_{lk} \left( \mathbf{p} (n-1) \right) > P_{\max,lk}, 
\end{cases}
\end{equation}
then the iterative approach converges to a fixed point.
\end{theorem}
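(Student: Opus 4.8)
The plan is to recast the two-branch update \eqref{eq:UpdatedPowerv1} as a single self-map and then follow the template of Theorem~\ref{Theorem:Alg1} as far as it will carry. First I would define, for every user, the modified interference function
\[
\widehat{J}_{lk}(\mathbf{p}) \triangleq \min\!\left( I_{lk}(\mathbf{p}),\, \frac{P_{\max,lk}^2}{I_{lk}(\mathbf{p})} \right),
\]
and verify that it reproduces exactly the two cases in \eqref{eq:UpdatedPowerv1}: since $I_{lk}(\mathbf{p})\le P_{\max,lk}$ is equivalent to $I_{lk}(\mathbf{p})\le P_{\max,lk}^2/I_{lk}(\mathbf{p})$, the minimum returns $I_{lk}$ on the satisfied branch and the reflected value $P_{\max,lk}^2/I_{lk}$ on the unsatisfied branch. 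The iteration is then $\mathbf{p}(n)=\widehat{\mathbf{J}}(\mathbf{p}(n-1))$. The first payoff of this form is that $\widehat{J}_{lk}(\mathbf{p})\le P_{\max,lk}$ for every $\mathbf{p}$, so the map sends the compact box $\mathcal{B}=\prod_{l,k}[0,P_{\max,lk}]$ into itself and automatically honors the power budget; together with $p_{lk}(0)=P_{\max,lk}$ this keeps the whole trajectory inside $\mathcal{B}$.

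Next I would record which properties of Definition~\ref{Def:TypeI} survive. Positivity is immediate because the noise term $\mathsf{NO}_{lk}$ forces $I_{lk}(\mathbf{p})>0$, hence both arguments of the minimum are strictly positive. Scalability also survives: for $\alpha>1$, the monotonicity and \emph{strict} scalability of the original $I_{lk}$ from Theorem~\ref{Theorem:Alg1} give $I_{lk}(\alpha\mathbf{p})<\alpha I_{lk}(\mathbf{p})$ and $I_{lk}(\alpha\mathbf{p})\ge I_{lk}(\mathbf{p})$, whence $P_{\max,lk}^2/I_{lk}(\alpha\mathbf{p})\le P_{\max,lk}^2/I_{lk}(\mathbf{p})<\alpha P_{\max,lk}^2/I_{lk}(\mathbf{p})$; since both arguments of the minimum at $\alpha\mathbf{p}$ fall strictly below $\alpha$ times the corresponding arguments at $\mathbf{p}$, the minima satisfy $\widehat{J}_{lk}(\alpha\mathbf{p})<\alpha\widehat{J}_{lk}(\mathbf{p})$. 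As $\widehat{J}_{lk}$ is also continuous (the two branches agree at $I_{lk}=P_{\max,lk}$), Brouwer's fixed-point theorem applied to the continuous self-map $\widehat{\mathbf{J}}:\mathcal{B}\to\mathcal{B}$ already guarantees existence of at least one fixed point.

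The hard part is convergence of the iterates, and here is the real obstacle: the third ingredient of Definition~\ref{Def:TypeI}, monotonicity, genuinely fails for $\widehat{J}_{lk}$, because on the unsatisfied branch $\widehat{J}_{lk}=P_{\max,lk}^2/I_{lk}$ is \emph{decreasing} in $\mathbf{p}$. Thus $\widehat{\mathbf{J}}$ is not order preserving, the monotone-sequence argument behind Theorem~\ref{Theorem:Alg1} cannot be reused, and the trajectory need not even be monotone once a user sits on the reflected branch. To recover convergence I would instead work in the Thompson (part) metric $d(\mathbf{x},\mathbf{y})=\max_{l,k}\lvert \ln x_{lk}-\ln y_{lk}\rvert$ on the positive orthant, exploiting the fact that the \emph{original} $I_{lk}$, being monotone and scalable, is non-expansive in this metric. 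The reflection $x\mapsto P_{\max,lk}^2/x$ is an isometry of $d$ (it only flips the sign of $\ln x$), the componentwise minimum is non-expansive, and strict scalability of $I_{lk}$ makes the contribution of the $I$-stage strictly distance-decreasing toward a fixed point; composing these, $\widehat{\mathbf{J}}$ is non-expansive and in fact a paracontraction, so $\mathbf{p}(n)$ converges to a fixed point. I expect the delicate step to be making this strict decrease uniform across switching branches: the active branch (satisfied versus reflected) can differ per user and change along the iterations, so the bulk of the work is to show that the strict scalability of the underlying $I_{lk}$ still forces $d(\mathbf{p}(n),\mathbf{p}^{\ast})$ to contract no matter which branch each coordinate currently occupies.
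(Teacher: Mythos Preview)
Your plan is essentially the paper's: the paper's (omitted) proof simply says the update \eqref{eq:UpdatedPowerv1} defines a \emph{two-sided} interference function and then cites convergence for that class, and what you have written is precisely the standard unpacking of that framework---the $\min\bigl(I_{lk},\,P_{\max,lk}^{2}/I_{lk}\bigr)$ rewrite, the loss of monotonicity compensated by two-sided scalability, and convergence via Thompson-metric non-expansiveness/paracontraction. The ``delicate step'' you flag (uniform contraction across switching branches) is exactly the content of the two-sided convergence theorem the paper invokes by name rather than proves.
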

\begin{proof}
The proof is first to confirm that the updated power policy in \eqref{eq:UpdatedPowerv1} follows a so-called two-sided  function and the convergence is then established. The detailed proof is omitted due to space limitations.
\end{proof} 
This theorem provides a procedure to minimize the total transmit power in the network and coping with the congestion issue. If $I_{lk}(\mathbf{p}(n-1))$ is less than $P_{\max, lk}$, then the data power of user~$k$ in cell~$l$ is updated based on \eqref{eq:plkn}, same as Algorithm~\ref{Algorithm1}. The main distinction is to prevent any unsatisfied user from transmitting full power whenever the congestion issue happens, i.e. $I_{lk} (\mathbf{p}(n-1)) > P_{\max,lk}$. By doing this power update, the mutual interference from this unsatisfied user to the others should be reduced, and hence there is chance for the remaining users to get their required SEs. The proposed optimization approach is summarized in Algorithm~\ref{Algorithm2} and its properties are stated in Remark~\ref{Remarlk:Property}.
\begin{remark}\label{Remarlk:Property}
The proposed algorithms work in both feasible and infeasible domain such that a fixed point to problem~\eqref{Prob:TotalPowerOptv3} can be obtained. For realizations of user locations that result in feasible domains, the fixed point obtained by those algorithms is the global optimum. The main difference between the two algorithms appears whenever the congestion issue happens: While Algorithm~\ref{Algorithm1} allocates the maximum data power to users when their SINR constraints are not satisfied, Algorithm~\ref{Algorithm2} reduces the data power.  Thus, the fixed point obtained by each algorithm may be different from each other.  
\end{remark}
\begin{algorithm}[t]
	\caption{Data power allocation to problem~\eqref{Prob:TotalPowerOptv3} by softly removing unsatisfied users} \label{Algorithm2}
	\textbf{Input}:  Define $P_{\max,lk}, \forall l,k$; Select $p_{lk}(0) = P_{\max,lk}, \forall l,k$; Compute the total data power $P_{\mathrm{tot}}(0) = \sum_{l=1}^L \sum_{k=1}^K p_{lk}(0)$; Set initial value $n=1$ and tolerance $\epsilon$.
	\begin{itemize}
		\item[1.] User~$k$ in cell~$l$ computes the standard interference function $	{I}_{lk} \left(\mathbf{p} (n-1) \right)$ using \eqref{eq:Ilk}.
		\item[2.] If ${I}_{lk} \left(\mathbf{p} (n-1) \right) \leq P_{\max,lk}$,  $ p_{lk}(n) = {I}_{lk} \left(\mathbf{p} (n-1) \right) $. Otherwise,  $p_{lk}(n) =  P_{\max,lk}^2 / {I}_{lk} \left(\mathbf{p} (n-1) \right)$.
		\item[3.] Repeat Steps $1,2$ with other users, then compute the ratio \fontsize{9}{9}{$\gamma (n) =$ $| P_{\mathrm{tot}}(n) - P_{\mathrm{tot}} (n-1) | /   P_{\mathrm{tot}}(n-1)$}.
		\item[4.] If $\gamma_l (n) \leq \epsilon$ $\rightarrow$ Set $p_{lk}^{\ast} = p_{lk}(n),\forall l,k,$ and Stop. Otherwise, set $n= n+1$ and go to Step $1$.
	\end{itemize}
	\textbf{Output}: A fixed point $ p_{lk}^{\ast}$, $\forall l,k$. 
\end{algorithm}

\begin{figure}[t]
	\centering
	\includegraphics[trim=3.0cm 7cm 3.5cm 8.5cm, clip=true, width=3in]{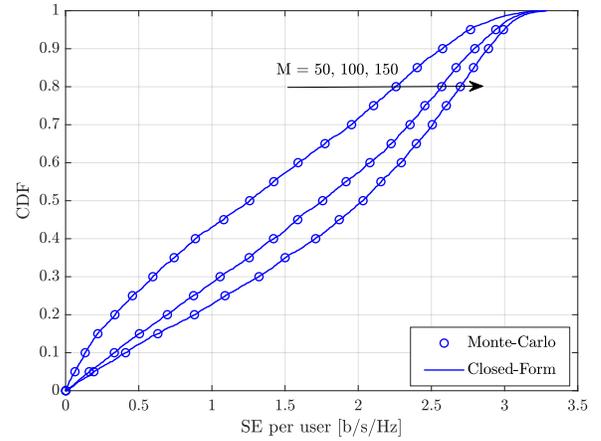} \vspace*{-0.5cm}
	\caption{The CDF of the uplink SE per user [b/s/Hz] with Monte-Carlo simulation and closed-form expression. }
	\label{FigMonteCarloClosedForm}
	\vspace*{-0.4cm}
\end{figure}
\begin{figure}[t]
	\centering
	\includegraphics[trim=3.0cm 7cm 3.5cm 8.5cm, clip=true, width=3in]{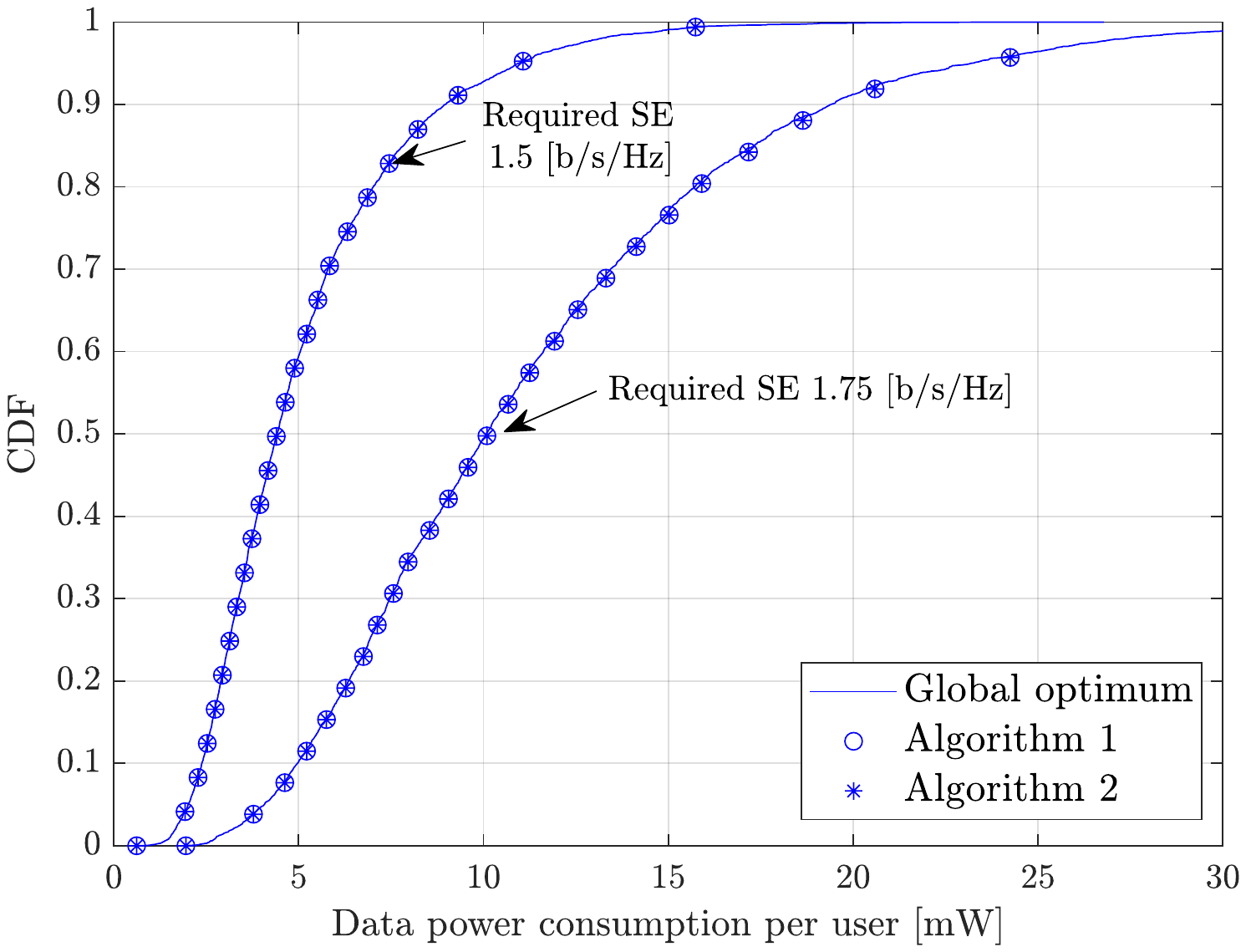} \vspace*{-0.5cm}
	\caption{The CDF of the power consumption per user [mW] for feasible systems with the different required SEs at the users and $M=100$. }
	\label{FigPowerFeas}
	\vspace*{-0.4cm}
\end{figure}

\begin{figure}[t]
	\centering
	\includegraphics[trim=2.6cm 7cm 3.5cm 8.5cm, clip=true, width=3in]{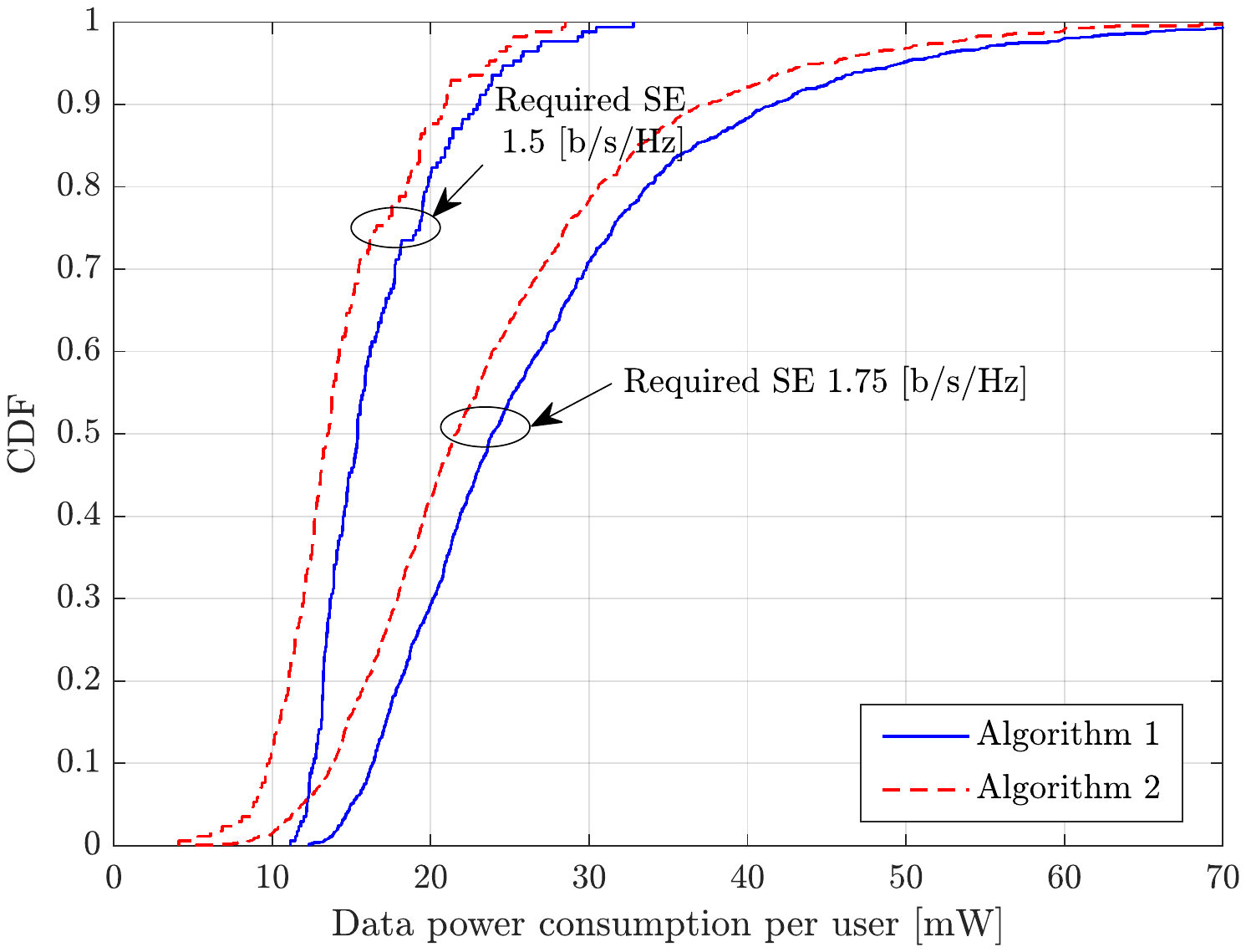} \vspace*{-0.5cm}
	\caption{The CDF of the power consumption per user [mW] for infeasible systems with the different required SEs at the users and $M=100$.}
	\label{FigPowerInfeas}
	\vspace*{-0.4cm}
\end{figure}

\begin{figure}[t]
	\centering
	\includegraphics[trim=3.0cm 7cm 3.5cm 8.5cm, clip=true, width=3in]{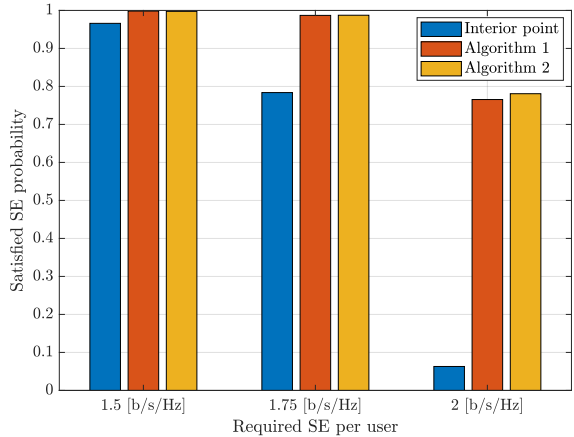} \vspace*{-0.5cm}
	\caption{The satisfied SE probability versus the different required SE per user for a system with $M=100$.}
	\label{FigProb}
	\vspace*{-0.4cm}
\end{figure}
\vspace*{-0.2cm}
\section{Numerical Results} \label{Sec:NumericalResults}
\vspace*{-0.1cm}
A massive MIMO system is considered with $L=4$ square cells in a $1$~km$^2$ area, each serving $K=5$ users. All the users are uniformly distributed within its cell with the distance to the BS no less than $35$~m. Each coherence block has $\tau_c = 200$ symbols and $\tau_p = 5$ orthogonal pilot signals with  $\hat{p}_{lk} = P_{\max,lk}= 200$~mW, $\forall l,k$. The users with same index in all cells share the same pilot sequence. The system bandwidth is $20$~MHz and the noise variance is $-96$~dBm with the noise figure of $5$~dB. The large-scale fading coefficient [dB] of user~$k$ in cell~$l$ and BS~$l'$ is modeled based on the 3GPP LTE specifications \cite{LTE2010a} as
$\beta_{lk}^{l'} = -148.1 - 37.6 \log_{10} ( d_{lk}^{l'} / 1 \mbox{km}) + z_{lk}^{l'}$,
where $d_{lk}^{l'} > 35$~m is the distance between user~$k$ in cell~$l$ and BS~$l'$; $z_{lk}^{l'}$ is the shadow fading coefficient following a Gaussian distribution with zero mean and standard deviation $7$~dB. There are $21$ scatterers per communication link. The covariance matrices are computed by using \cite{van2016multi}. In the proposed algorithms, $\epsilon = 0.001$. For feasible systems, the global optimum obtained by interior point methods \cite{van2020power} are included for comparison.

Figure~\ref{FigMonteCarloClosedForm} shows the cumulative distribution function (CDF) of SE per user [b/s/Hz] to verify the correctness of the closed-form expression of the uplink SE for each user. All users spend full power for the data transmission. Particularly, the closed-form expression result matches very well Monte-Carlo simulation result for all the considered number of BS antennas. Fig.~\ref{FigMonteCarloClosedForm} also illustrates the SE per user gets better when each BS is equipped with more antennas. Each user can be served by a data rate increasing from $1.3$ to $1.8$ [b/s/Hz] on the average if the number of BS antennas increases from $50$ to $150$.

The CDF of the data power consumption [mW] consumed by each user is shown in Fig.~\ref{FigPowerFeas} for feasible systems. The proposed algorithms provide a unique fixed point that is the global optimum as what has been obtained by the interior-point methods. Furthermore, data power escalates when users require higher SEs. With a required SE of $1.5$ [b/s/Hz], each user only spends $5.2$ mW on average. But it drastically grows to $11.4$ mW if the required SE is $1.75$ [b/s/Hz]. 

Figure~\ref{FigPowerInfeas} displays the  CDF of the data power consumption [mW] per user for infeasible systems which is the main interest of this paper when working with multiple access in massive MIMO, since there is no global optimum to obtain. All the users consume non-zero powers at the fixed points identified Algorithms~\ref{Algorithm1} and \ref{Algorithm2}. The data power consumption per user obtained by Algorithm~\ref{Algorithm1} are $12.3\%$ and $15.1\%$ higher than the ones by Algorithm~\ref{Algorithm2} for the given required SEs.

Figure~\ref{FigProb} plots the satisfied SE probability defined as the fraction of the number of large-scale fading realizations in which the users can be served by the required SEs. If each user requires an SE $1.5$ [b/s/Hz], all the benchmarks provide an overwhelming satisfied SE probability. The interior-point methods perform worse with higher SE requirements, especially only $6.3\%$ users satisfied the required SE $2$ [b/s/Hz]. In contrast, the proposed algorithms still offer a satisfied SE probability of more than $75\%$.
\vspace*{-0.15cm}
\section{Conclusion} \label{Sec:Conclusion}
\vspace*{-0.1cm}
This paper has analyzed the system performance of massive MIMO systems with an arbitrary number of BS antennas, users, and scatterers by utilizing the double scattering channel model, rather than the asymptotic regime as in previous works. The closed-form expression of the uplink SE per user was computed in closed form. We proposed two algorithms to handle effectively the congestion issue that often happens since multiple users are simultaneously connecting to the network and sharing the same time and frequency resources.
 \vspace*{-0.5cm}
\bibliographystyle{IEEEtran}
\bibliography{IEEEabrv,refs}
\end{document}